\DeclareSymbolFont{matha}{OML}{txmi}{m}{it}% txfonts
\DeclareMathSymbol{\varv}{\mathord}{matha}{118}
\begin{document}
	\title{On the Ergodic Capacity  for SIM-Aided Holographic MIMO Communications} 
	\author{Anastasios Papazafeiropoulos, Ioannis Bartsiokas,  Dimitra I. Kaklamani, 			Iakovos S. Venieris \thanks{A. Papazafeiropoulos is with the Communications and Intelligent Systems Research Group, University of Hertfordshire, Hatfield AL10 9AB, U. K. Ioannis Bartsiokas and  Dimitra I. Kaklamani are with the Microwave and Fiber Optics Laboratory, and Iakovos S. Venieris is  with the Intelligent Communications and Broadband Networks Laboratory, School of Electrical and Computer Engineering, National Technical University of Athens, Zografou, 15780 Athens,	Greece.	
			%				A. Papazafeiropoulos was supported  by the University of Hertfordshire's 5-year Vice Chancellor's Research Fellowship. S. Chatzinotas   was supported by the National Research Fund, Luxembourg, under the project RISOTTI.
			Corresponding author's email: tapapazaf@gmail.com. This work is financially supported by EPSRC/DSIT Federated Telecoms Hub - TITAN (6G-DISCO).}}
	\maketitle\vspace{-1.7cm}
	
	\maketitle
	\begin{abstract}	
		We derive a novel closed-form lower bound on the ergodic capacity of holographic multiple-input multiple-output (HMIMO) systems enhanced by stacked intelligent metasurfaces (SIMs) under Rayleigh fading conditions. The proposed expression is valid for systems with a finite number of antennas and SIM elements and exhibits tightness throughout the whole signal-to-noise ratio (SNR) range. 	Furthermore, we conduct a comprehensive low-SNR analysis, offering meaningful observations  on how key system parameters influence the capacity performance.
	\end{abstract}
	
	\begin{keywords}
	Holographic MIMO (HMIMO), 	stacked intelligent metasurfaces (SIMs),  6G networks.
	\end{keywords}
	
	\section{Introduction}
Considered essential for sixth-generation (6G), reconfigurable intelligent surfaces (RIS) leverage near-passive, controllable units to reconfigure the surrounding propagation environment in real time. \cite{DiRenzo2020,Papazafeiropoulos2021}. A typical RIS comprises a  number of cost-effective elements capable of inducing controllable phase shifts on incident electromagnetic waves, thereby enabling various signal enhancement objectives to support improved wireless connectivity \cite{DiRenzo2020}.
	
	Despite the broad applicability of RIS across diverse communication scenarios due to several inherent advantages, the majority of prior research has primarily focused on single-layer metasurface designs, which inherently limit beam management capabilities \cite{Guo2020a}. Additionally, conventional single-layer RIS designs, constrained by hardware limitations, are generally inadequate for effectively mitigating inter-user interference. 
	
	These constraints spurred the development of the stacked intelligent metasurface (SIM) paradigm  by \textit{An et al.} \cite{An2023}, which offers notable improvements over traditional single-layer RIS. Specifically, a SIM-assisted transceiver architecture was proposed for point-to-point MIMO configurations, with one SIM at the transmitter and another at the receiver, enabling electromagnetic (EM) wave propagation through the metasurfaces. Among the key advantages of SIM technology are its enhanced computational efficiency, ultrafast processing capabilities, lower computational complexity, reduced reliance on RF chains, and decreased energy consumption, which position it as a promising enabler for sixth-generation (6G) communication systems. 	 In this context, the achievable rate performance of SIM-assisted systems in multi-user MISO configurations was examined in \cite{Papazafeiropoulos2024, Papazafeiropoulos2024b} under certain conditions, while aspects of near-field beamforming were explored in \cite{Papazafeiropoulos2024c}.\footnote{\textcolor{black}{It is also important to acknowledge that practical metasurface implementations
			exhibit non-idealities such as phase–amplitude coupling and tuning
			inaccuracies. Recent works on stacked and holographic metasurfaces
			\cite{Li2024,Li2025} have shown that  impairments
			can affect the achievable rate, indicating that hardware-aware SIM modeling
			is an important direction for future research.}}

	In this letter, we first provide a general closed-form lower bound on the ergodic capacity of SIM-assisted MIMO system and a study of the low-SNR regime. Specifically, the proposed general lower bound applies to systems with an \textcolor{black}{arbitrary} number of SIM elements at both ends and maintains tightness across the full SNR regime. \textcolor{black}{	Compared with existing works, this paper advances the analysis of
		SIM-assisted HMIMO systems in several key ways. While  \cite{An2023} and  \cite{Papazafeiropoulos2024}
		investigate achievable rates or hybrid SIM architectures, they do not
		provide closed-form ergodic capacity expressions. Also, classical bounds such
		as \cite{Matthaiou2011a} address correlated MIMO channels but do not model the
		metasurface-induced transformations introduced by  SIMs.  Furthermore, the
		proposed projected-gradient framework enables joint optimization of all
		SIM phase profiles, which is not available in \cite{An2023} or \cite{Matthaiou2011a}.} Moreover, a detailed second-order analysis in the low-SNR region is conducted, yielding closed-form expressions for key performance metrics, which are the minimum energy per bit (EB) and the wideband slope (WS) that characterize MIMO system behavior in this  regime.  
	%	\textcolor{black}{Despite that the capacity of SIM-assisted holographic MIMO (HMIMO) systems has been investigated in recent studies \cite{An2023, Papazafeiropoulos2024}, a closed-form lower bound remains unavailable due to the inherent difficulty in analytically averaging over the channel eigen-statistics. Note that \cite{An2023} did not optimize the capacity and \cite{Papazafeiropoulos2024} addressed a multi-user hybrid-SIM architecture. Furthermore, while classical bounds such as 		\cite{Matthaiou2011a} address distributed MIMO systems with conventional correlation 		models, our analysis captures the additional structure introduced by the 		SIM layers.}

	\section{System Model}\label{System}
	We focus on a point-to-point SIM-aided HMIMO model with SIM integration at both transmission and reception ends. Each SIM interfaces with an intelligent controller responsible for tuning the phase of the electromagnetic wave impinging on each meta-atom of every surface. 
	The transmitter is equipped with $ N_{t} $ antennas, while the receiver is configured with $ N_{r} $ antennas. According to the SIM  architecture  in \cite{An2023,Papazafeiropoulos2024a}, let  $ L $ and $M$ be the total number of layers and the corresponding number of meta-atoms in each layer within the transmit-side SIM. Moreover, let  $ K $ and $N$ be the number of layers and the per-layer meta-atom count at the receiver  SIM. \textcolor{black}{Also, we define  $s\triangleq \min(M,N)$ and $t \triangleq \max(M,N)$}.
	
\subsection{Layer-Wise Phase Control for Stacked Metasurfaces}
Let $\mathcal{L}$ and $\mathcal{K}$ denote the index sets of the transmit- and receive-side layers, respectively. Each transmit layer comprises $M$ meta-atoms indexed by $\mathcal{M}=\{1,\ldots,M\}$, while each receive layer comprises $N$ elements indexed by $\mathcal{N}=\{1,\ldots,N\}$ as mentioned. We adopt unit-modulus phase control at every element. For the $l$-th transmit layer ($l\in\mathcal{L}$), define the per-element phase angles $\theta_{m}^{l}\in[0,2\pi)$, $m\in\mathcal{M}$, and the corresponding complex weights $	\phi_{m}^{l} \triangleq e^{\jmath \theta_{m}^{l}}$, 
%\begin{align}
%	\phi_{m}^{l} &\triangleq e^{\jmath \theta_{m}^{l}}, \label{eq:tx_phi_element}
%\end{align}
which are stacked into the vector and diagonal transmission matrix
\begin{align}
	\boldsymbol{\phi}^{l} \!\triangleq\! \big[\phi_{1}^{l},\ldots,\phi_{M}^{l}\big]^{\T}\!\!\in\mathbb{C}^{M\times 1}, \!\!\quad
	\boldsymbol{\Phi}^{l} \!\triangleq\! \mathrm{diag}\big(\boldsymbol{\phi}^{l}\big)\!\!\in\mathbb{C}^{M\times M}. \label{eq:tx_Phi}
\end{align}
Here, $\boldsymbol{\Phi}^{l}$ captures the element-wise, unit-modulus transmission coefficients applied by the $l$-th layer. Analogously, for the $k$-th receive layer ($k\in\mathcal{K}$), let the per-element phase angles be $\vartheta_{n}^{k}\in[0,2\pi)$, $n\in\mathcal{N}$, with associated complex coefficients $\xi_{n}^{k} \triangleq e^{\jmath \vartheta_{n}^{k}}$, 
%\begin{align}
%	, \label{eq:rx_psi_element}
%\end{align}
assembled as
\begin{align}
	\boldsymbol{\xi}^{k} \!\triangleq\! \big[\xi_{1}^{k},\ldots,\xi_{N}^{k}\big]^{\T}\!\!\in\mathbb{C}^{N\times 1}, 
	\quad \!\!\boldsymbol{\Xi}^{k} \!\triangleq\! \mathrm{diag}\big(\boldsymbol{\xi}^{k}\big)\!\!\in\mathbb{C}^{N\times N}. \label{eq:rx_Xi}
\end{align}
The diagonal matrix $\boldsymbol{\Xi}^{k}$ therefore capures the receive-layer's unit-modulus, element-wise phase profile.

%	Based on this formulation, let $ \theta_{m}^{l}\in [0,2\pi), m \in \mathcal{M}, l \in \mathcal{L} $ with $ \phi_{m}^{l} =e^{j \theta_{m}^{l}}$, denote the phase angle applied by the $ m $-th meta-atom on the 
%	$ l $-th transmit layer. The transmission coefficient matrix corresponding to the $ l $-th transmit layer is expressed as  $ \bPhi^{l}=\diag(\bphi^{l})\in \mathbb{C}^{M \times M} $, where the vector 
%	$ \bphi_{l} =[\phi^{l}_{1}, \dots, \phi^{l}_{M}]^{\T}\in \mathbb{C}^{M \times 1}$
%	contains the complex phase terms applied by the meta-atoms on that layer. In a similar manner, let 
%	$ \xi_{n}^{k}\in [0,2\pi)$, for 
%	$n \in \mathcal{N}, k \in \mathcal{K} $, denote the phase angle applied by the 
%	$n$-th element on the 
%	$k$-th layer of the receive-side metasurface with the corresponding phase shift given by  $ \xi_{n}^{k}=e^{j \xi_{n}^{k}} $. Now, $ \bXi^{k}=\diag(\bxi^{k})\in \mathbb{C}^{M \times N} $ with $ \bxi_{k} =[\xi^{k}_{1}, \dots, \xi^{k}_{M}]^{\T}\in \mathbb{C}^{N \times 1}$ being the  $k$-th receive  matrix.
%	
	%\begin{figure}
	%	\begin{center}
		%		\includegraphics[width=0.8\linewidth]{SIM_MIMO_Tassos1.pdf}
		%		\caption{A SIM-assisted HMIMO system. }
		%		\label{Fig01}
		%	\end{center}
	%\end{figure}
	Overall,   the  transmitter and receiver SIMs are expressed as
	\begin{align}
		\bP&=\bPhi^{L}\bW^{L}\cdots\bPhi^{2}\bW^{2}\bPhi^{1}\bW^{1}\in \mathbb{C}^{M \times N_{t}},\label{TransmitterSIM}\\
		\bD&=\bU^{1}\bXi^{1}\bU^{2}\bXi^{2}\cdots\bU^{K}\bXi^{K}\in \mathbb{C}^{N_{r} \times N}.
	\end{align}
	% Inter-layer coupling and external interfaces
		For $l\in\mathcal{L}\setminus\{1\}$, let $\mathbf{W}^{l}\in\mathbb{C}^{M\times M}$ denote the  transmission–coupling matrix that characterizes the interaction between the $(l-1)$-th and $l$-th transmit layers. 
	Similarly, for $k\in\mathcal{K}\setminus\{1\}$, let $\mathbf{U}^{k}\in\mathbb{C}^{N\times N}$ represent the transmission–coupling matrix describing the interaction between the $(k-1)$-th and $k$-th receive layers. 	The external interfaces are modeled as follows. $\mathbf{W}^{1}\in\mathbb{C}^{M\times N_t}$ maps the $N_t$ transmit RF chains (or feed ports) to the first transmit SIM layer, 
	whereas $\mathbf{U}^{1}\in\mathbb{C}^{N_r\times N}$ maps the final receive SIM layer to the $N_r$-element receive antenna array. The  transmission coefficients at both the transmitter and the receiver can be written based on the  Rayleigh-Sommerfeld diffraction theory \cite{An2023}.

	The overall $ \bH \in \mathbb{C}^{N_{r} \times N_{t}} $ channel can be expressed as $	\bH=\bD\bG\bP$, where $
		\bG=\bR^{1/2}_{\mathrm{R}}\tilde{\bG}\bR^{1/2}_{\mathrm{T}}$ 
	is the $N\times M$ HMIMO channel matrix linking the transmit and receive SIM layers  \cite{Hu2022}. Herein,  
	$ \bR_{\mathrm{R}}\in \mathbb{C}^{N\times N} $  and $\bR_{\mathrm{T}}\in \mathbb{C}^{M\times M} $ are the spatial correlation matrices associated with the receive- and transmit-side SIMs, respectively. The mathematical expressions of these  spatial correlation matrices are given by \cite{Demir2022}.
	%\begin{align}
	%	[\bR_{\mathrm{T}}]_{m,\tilde{m}}&=\mathrm{sinc}(2 r_{m,\tilde{m}}/\lambda),  m\in \mathcal{M}, \tilde{m}\in \mathcal{M},\label{t}\\
	%	[\bR_{\mathrm{R}}]_{\tilde{n},n}&=\mathrm{sinc}(2 t_{\tilde{n},n}/\lambda),  \tilde{n}\in \mathcal{N}, n\in \mathcal{N},\label{c}
	%\end{align}
	%respectively.
	In addition, $ \tilde{\bG}\sim \mathcal{CN}(\b0,\frac{\beta}{M}\Id_{N}\otimes \Id_{M})\in \mathbb{C}^{N\times M} $ is the   independent  Rayleigh fading channel with   $\beta $ denoting the average path loss between the two  SIMs, which is given by \cite{Rappaport2015}.\footnote{\textcolor{black}{Although $\mathbf{G}$ follows a Kronecker correlation model, the
			composite channel $\mathbf{H}=\mathbf{D}\mathbf{G}\mathbf{P}$ becomes
			highly non-trivial due to SIM-induced multilayer coupling and correlation.
			The derived lower bound incorporates these effects through
			$\mathbf{P}$, $\mathbf{D}$, and the correlation matrices, extending
			classical Wishart-based capacity results to the stacked SIM architecture, which provides the first tractable analytical characterization for such
			systems.}}
	%\begin{align}
	%	\beta(d)=\beta(d_{0})+10 b \log_{10}\left(\frac{d}{d_{0}}\right)+X_{\delta},~d \ge d_{0},
	%\end{align}
	%where  $ b $ is the path loss exponent,  $ X_{\delta}\sim \mathcal{CN}(0, \delta) $ with  $ \delta $  depending on shadow fading, and $ 
	%\beta(d_{0})=20 \log_{10}(4 \pi d_{0}/\lambda)~\mathrm{dB}$ is the reference free-space path loss at distance $ d_{0} $.
	
	The receiver obtains $\by \in \mathbb{C}^{N_{r}}$, which is  given by 
	\begin{align}
		\by=\bH\bs +\bn
	\end{align}
	with 	$\bs \in \mathbb{C}^{N_{t}}$ being the transmitted signal, where  $\EE\{|s_{i}|^{2}\}=\frac{P}{N_t}, i=1,\ldots,N_{t}$ with $P$ expressing the  total transmit power. Also, $\bn\in \mathbb{C}^{N_{r}}\sim \mathcal{CN}(\b0, \sigma^{2}\Id_{N_{r}})$ represents the additive white Gaussian noise (AWGN) with variance $\sigma^{2} $.
	
	\section{Lower Bound on the Ergodic Capacity}\label{ErgodicCapacity}
	By assuming perfect CSI at the receiver and no CSI at the transmitter, uniform power allocation across all data streams is a sensible choice. In this case, the SIM-aided HMIMO  ergodic capacity in bits/s/Hz is expressed as
	\begin{align}
		C_{\mathrm{erg}}=\EE\bigg[\!\log_{2}\!\left(\det\!\left(\Id_{N_{t}}+\frac{\rho}{N_{t}}\bH^{\H}\bH\right)\!\right)\!\bigg],\label{MI}
	\end{align}
	where $\rho=\frac{P}{\sigma^{2}}$ is the signal-to-noise ratio (SNR).
	\begin{theorem}\label{th1}
	The ergodic capacity of a  SIM-aided HMIMO system admits a lower bound given by
		\begin{align}
			C_{\mathrm{erg}}&\ge 	C_{\mathrm{LB}}=N_{t} \log_{2}\bigg(\!\!1+\frac{\rho}{N_{t}} \exp \Big(\frac{1}{N_{t}}\big(\textcolor{black}{\sum_{i=0}^{s-1}\psi(t-i)}\nn\\
			&+\ln |\bP\bP^{\H}|+\ln |\bD^{\H}\bD|+\ln |\bR_{\mathrm{T}}|+|\bR_{\mathrm{R}}|\big)\!\Big)\!\bigg)\label{capacity0}
		\end{align}
		with $\psi(x)$ being the digamma function [13, eq. (8.360.1)].
	\end{theorem}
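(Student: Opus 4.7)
I would proceed in two stages: a deterministic Minkowski-type inequality on the log-determinant, followed by Jensen's inequality to exchange the expectation with the logarithm. Applying Minkowski's determinant inequality to the positive semi-definite matrix $\tfrac{\rho}{N_{t}}\bH^{\H}\bH\in\mathbb{C}^{N_{t}\times N_{t}}$ gives
\begin{align*}
\det\!\left(\Id_{N_{t}}+\tfrac{\rho}{N_{t}}\bH^{\H}\bH\right)^{\!1/N_{t}}\!\!\ge 1+\tfrac{\rho}{N_{t}}\det(\bH^{\H}\bH)^{1/N_{t}}.
\end{align*}
Taking $\log_{2}$, noting that the map $x\mapsto\log_{2}(1+\tfrac{\rho}{N_{t}}e^{x})$ is convex on $\mathbb{R}$, and invoking Jensen's inequality yields
\begin{align*}
C_{\mathrm{erg}}\ge N_{t}\log_{2}\!\left(1+\tfrac{\rho}{N_{t}}\exp\!\left(\tfrac{1}{N_{t}}\,\EE[\ln\det(\bH^{\H}\bH)]\right)\right),
\end{align*}
so the task reduces to evaluating $\EE[\ln\det(\bH^{\H}\bH)]$ in closed form.

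\textbf{Evaluating the log-determinant.} Substituting $\bH=\bD\bR_{\mathrm{R}}^{1/2}\tilde{\bG}\bR_{\mathrm{T}}^{1/2}\bP$ and invoking multiplicativity of the determinant together with the unitary invariance of the i.i.d.\ complex Gaussian matrix $\tilde{\bG}$---e.g., via SVD or QR factorizations of the deterministic outer factors $\bR_{\mathrm{T}}^{1/2}\bP$ and $\bD\bR_{\mathrm{R}}^{1/2}$---the deterministic part peels off additively as $\ln|\bP\bP^{\H}|+\ln|\bD^{\H}\bD|+\ln|\bR_{\mathrm{T}}|+\ln|\bR_{\mathrm{R}}|$, leaving a core complex Wishart log-determinant. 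I would then invoke the standard digamma identity (e.g.,~\cite{Matthaiou2011a}): for a $t\times s$ matrix $\bX$ with $t\ge s$ and i.i.d.\ $\mathcal{CN}(0,1)$ entries, $\EE[\ln\det(\bX^{\H}\bX)]=\sum_{i=0}^{s-1}\psi(t-i)$. Assembling the pieces gives
\begin{align*}
\EE[\ln\det(\bH^{\H}\bH)]&=\sum_{i=0}^{s-1}\psi(t-i)+\ln|\bP\bP^{\H}|\\
&\quad+\ln|\bD^{\H}\bD|+\ln|\bR_{\mathrm{T}}|+\ln|\bR_{\mathrm{R}}|,
\end{align*}
which, substituted into the Jensen bound above, delivers the claimed expression.

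\textbf{Main obstacle.} The technically delicate step is precisely this factorization, since $\bP\in\mathbb{C}^{M\times N_{t}}$ and $\bD\in\mathbb{C}^{N_{r}\times N}$ are rectangular and $\bH$ has a deeply nested product structure. The SVD/QR splits must be arranged so that (i) the residual Gaussian submatrix preserves its i.i.d.\ statistics (leveraging left/right rotational invariance of $\tilde{\bG}$), and (ii) the deterministic residue collapses exactly to the four stated log-determinants rather than to mixed cross-coupled terms. A secondary care point is the variance normalization $\beta/M$ of $\tilde{\bG}$, which has to be consistently absorbed---either into a rescaled Wishart moment or into the deterministic part of the decomposition---so that the final bound matches the stated form.
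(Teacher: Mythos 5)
Your proof follows essentially the same route as the paper's: Minkowski's determinant inequality on $\det(\Id_{N_t}+\tfrac{\rho}{N_t}\bH^{\H}\bH)$, then Jensen's inequality applied to the convex map $x\mapsto\log_2(1+\tfrac{\rho}{N_t}e^{x})$, then a multiplicative split of $\ln\det(\bH^{\H}\bH)$ into the four deterministic log-determinants plus a central Wishart term evaluated via the digamma identity $\sum_{i=0}^{s-1}\psi(t-i)$. The ``main obstacle'' you flag---making the factorization rigorous for the rectangular, nested product $\bH=\bD\bG\bP$ and tracking the $\beta/M$ normalization---is a genuine subtlety, but the paper's own appendix handles it only by asserting $\det(\mathbf{AB})=\det(\mathbf{A})\det(\mathbf{B})$, so your treatment is, if anything, more careful than the published one.
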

	\begin{proof}
		Please see Appendix~\ref{th1proof}.	
	\end{proof}
	\begin{remark}
		We observe that $	C_{\mathrm{LB}} $ simplifies to the lower bound obtained in \cite[Eq. 5]{Matthaiou2011a}.
%		\footnote{\textcolor{black}{We note that the closed-form bound is derived under $M=N$ so that
%			$\mathbf{G}^H\mathbf{G}$ is square, consistent with standard Wishart-based
%%				results \cite{Lozano2003}. However, the general SIM-assisted channel model itself remains fully
%				rectangular and does not rely on this assumption.}}
	\end{remark}
	%\subsection{Capacity Optimization}\label{RateOptimization}	
	
	The optimization problem is mathematically described as 
	\begin{subequations}\label{eq:subeqns}
		\begin{align}
			(\mathcal{P})~~&\max_{\bphi_{l},\bxi_{k}} 	\;C_{\mathrm{LB}}\label{Maximization1} =f(\bphi_{l},\bxi_{k}),\\
			&	\bP=\bPhi^{L}\bW^{L}\cdots\bPhi^{2}\bW^{2}\bPhi^{1}\bW^{1},
			\label{Maximization3} \\
			&	\bZ=\bU^{1}\bXi^{1}\bU^{2}\bXi^{2}\cdots\bU^{K}\bXi^{K},
			\label{Maximization4} \\
			&		\bPhi^{l}=\diag(\phi^{l}_{1}, \dots, \phi^{l}_{M})\quad \!	\bXi^{k}=\diag(\xi^{k}_{1}, \dots, \xi^{k}_{N}),
			\label{Maximization6} \\
			&	|	\phi^{l}_{m}|=1, 	\!\!\!\!\!\!\!\!\;\;\;\;\;\!\!~\!		|\xi^{k}_{n}|=1	\label{Maximization8}.
		\end{align}
	\end{subequations}
	
	It is evident that the optimization problem $ (\mathcal{P}) $ is classified as nonconvex, since the objective function does not exhibit concavity or convexity with respect to the optimization variables and is further constrained by non-convex constant modulus conditions. Prior work on SIM-assisted systems has predominantly utilized alternating optimization (AO), where the transmit and receive SIM phase shifts are updated alternately. Although AO is relatively straightforward to implement, its convergence often requires numerous iterations, particularly as the size of the SIM configuration  increases \cite{Perovic2021}. In the context of SIM-assisted HMIMO architectures, where every individual metasurface  typically comprises a substantial number of elements, AO becomes computationally inefficient. These considerations encourage the adoption of a more effective approach based on the projected gradient method, inspired by techniques in \cite{Perovic2021}, which enables concurrent optimization of the phase shifts at both the transmitter- and receiver-side SIMs. 	
	%	Section \ref{Numerical} will illustrate the superior convergence speed of the proposed method compared to the AO approach.
	
	To address the optimization problem in \eqref{eq:subeqns}, we introduce the proposed algorithm in Algorithm \ref{Algoa1}. The core principle involves initializing from a feasible point 
	$ (\bphi_{l}^{0},\bxi_{k}^{0}) $ and iteratively updating the variables along the gradient direction	 $ \nabla f(\bphi_{l},\bxi_{k}) $. The step size for each update is governed by the positive parameters $\mu_{n}^{q} >0$, for  $ q=1,2 $.
	
	The algorithm operates within the subsequent feasible sets
	\begin{align}
		\Phi_{l}&=\{\bphi_{l}\in \mathbb{C}^{M \times 1}: |\phi^{l}{i}|=1, ; i=1,\ldots, M\},\\
		\Xi_{k}&=\{\bxi_{k}\in \mathbb{C}^{N \times 1}: |\xi^{k}_{i}|=1, ; i=1,\ldots, N\}.
	\end{align}
	
	At each iteration, before proceeding along the gradient direction of 	$ f(\bphi_{l},\bxi_{k}) $, the updated variables are projected back onto their respective feasible sets 
	$ \Phi_{l} $, and $ \Xi_{k} $. This ensures that the iterates remain within the allowable domain of the problem. For completeness, we also provide the gradient components 
	$\nabla_{\bphi_{l}}f(\bphi_{l},\bxi_{k}) $ and $ \nabla_{\bxi_{k}}f(\bphi_{l},\bxi_{k}) $, which represent the directions of the steepest ascent for the objective function $f$, as defined in \cite[Theorem 3.4]{hjorungnes:2011}. 	The operators 	$ P_{\Phi_{l}}(\cdot) $ and $ P_{\Xi_{k}}(\cdot) $ correspond to projections onto 
	$ \Phi_{l} $ and $ \Xi_{k} $, respectively.
	
	\begin{algorithm}[th]
		\caption{Projected Gradient Ascent Method \label{Algoa1}}
		\begin{algorithmic}[1]
			\STATE Input: $\bphi_{l}^{0},\bxi_{k}^{0},\mu_{i}^{\textcolor{black}{q}}>0$ \textcolor{black}{for $ q=1,2 $}.
			\STATE \textbf{for} $ i=1,2,\ldots \textbf{do} $
			\STATE ~~~~~$\bphi_{l}^{i+1}=P_{\Phi_{l}}(\bphi_{l}^{i}+\mu_{i}^{\textcolor{black}{1}}\nabla_{\bphi_{l}}f(\bphi_{l}^{i},\bxi_{k}^{i}))$
			\STATE ~~~~~$\bxi_{k}^{i+1}=P_{\Xi_{k}}(\bxi_{k}^{i}+\mu_{i}^{\textcolor{black}{2}}\nabla_{\bxi_{k}}f(\bphi_{l}^{i},\bxi_{k}^{i}))$
			\STATE \textbf{end for}
		\end{algorithmic}
	\end{algorithm}

	\begin{lemma}\label{lemmaGradient}
		The complex derivatives of 	$f(\bphi_{l},\bxi_{k}) $
		regarding the conjugate variables 
		$\bphi_{l}^{*}$ and $\bxi_{k}^{*}$	are computed as  
		\begin{align}
			\nabla_{\bphi_{l}}f(\bphi_{l},\bxi_{k})&=\bar{V}\diag(\bA_{l}(\bP^{\H}\bP)^{-1} ) ,\label{gradient2}\\
			\nabla_{\bxi_{k}}f(\bphi_{l},\bxi_{k})&=\bar{V}\diag(\bC_{k}(\bD\bD^{\H})^{-1} ) ,\label{gradient3}
		\end{align}
		where $\bar{V}=\frac{\rho V}{  \ln2 N_{t}  \bigg(\!\!1+\frac{\rho}{N_{t}} V\!\bigg)}$, 
		\begin{align}
			\bA_{l}&\!=\!\bW^{l}\bPhi^{l-1}\bW^{l-1}\!\cdots\! \bPhi^{1}\bW^{1}\bPhi^{L}\bW^{L}\cdots\bPhi^{l+1}\bW^{l+1},\\
				\bC_{k}&\!=\!\bU^{l}\bXi^{l-1}\bU^{l-1}\!\cdots\! \bXi^{1}\bU^{1}\bXi^{L}\bU^{L}\cdots\bXi^{l+1}\bU^{l+1}.
		\end{align}
	\end{lemma}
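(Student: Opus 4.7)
The objective $f(\bphi_l,\bxi_k) = C_{\mathrm{LB}}$ depends on $\bphi_l$ only through $\ln|\bP\bP^{\H}|$ and on $\bxi_k$ only through $\ln|\bD^{\H}\bD|$. My plan is a two-stage Wirtinger-calculus derivation: an outer chain rule through the $\log_2$ and $\exp(\cdot)$ composition of \eqref{capacity0}, followed by a matrix-differential computation of the inner log-determinants. Throughout, $\nabla_{\bphi_l}f$ is understood, as is standard for complex optimization of real-valued objectives \cite[Thm.~3.4]{hjorungnes:2011}, as the Wirtinger derivative with respect to the conjugate $\bphi_l^{*}$.

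For the outer step, differentiating $N_t\log_2(1+(\rho/N_t)V)$ with respect to the exponent inside $V=\exp(\cdot)$ gives, after accounting for the $1/N_t$ factor on that exponent in \eqref{capacity0}, exactly the scalar prefactor $\bar V$ stated in the lemma. The remaining task is the inner derivative of $\ln|\bP\bP^{\H}|$ with respect to $\bphi_l^{*}$. The decisive algebraic observation is that $\bP$ factors as $\bP=\bB_l\,\bPhi^{l}\,\bC_l$, where $\bB_l=\bPhi^{L}\bW^{L}\cdots\bPhi^{l+1}\bW^{l+1}$ and $\bC_l=\bW^{l}\bPhi^{l-1}\cdots\bPhi^{1}\bW^{1}$ are independent of $\bphi_l$. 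I would then apply Jacobi's formula $d\ln|\bP\bP^{\H}|=\mathrm{tr}[(\bP\bP^{\H})^{-1}d(\bP\bP^{\H})]$ together with $d\bP=\bB_l\,d\bPhi^{l}\,\bC_l$, isolate the portion of the real-valued differential that carries $d\bphi_l^{*}$, and use the trace-cyclic property to rewrite it as $\mathrm{tr}[\bK\,d(\bPhi^{l})^{\H}]$ for a matrix $\bK$ of size $M\times M$. Since $d(\bPhi^{l})^{\H}$ is diagonal, this trace collapses to $\diag(\bK)^{\T}\,d\bphi_l^{*}$, and a single cyclic rearrangement of $\bK$ merges the factors $\bB_l,\bC_l$ into the combined product $\bA_l$ of the lemma, with $(\bP^{\H}\bP)^{-1}$ appearing through the usual equivalence between the two forms of the log-det derivative. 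This yields \eqref{gradient2}.

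The gradient \eqref{gradient3} follows by an entirely parallel argument applied to $\bD$: factor $\bD$ around $\bXi^{k}$, differentiate $\ln|\bD^{\H}\bD|$, and invoke the trace-cyclic identity to obtain $\diag(\bC_k(\bD\bD^{\H})^{-1})$. The appearance of $(\bD\bD^{\H})^{-1}$ rather than $(\bD^{\H}\bD)^{-1}$ simply reflects that the external RF interface $\bU^{1}$ sits on the opposite (antenna) side of the receive chain. The step I expect to require the most care is the Wirtinger bookkeeping: separating the portion of the real-valued differential that belongs to $d\bphi_l^{*}$ (respectively $d\bxi_k^{*}$) from the portion belonging to $d\bphi_l$ (respectively $d\bxi_k$), so that the gradient is read off from the correct conjugate side. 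Once that is pinned down, what remains is the cyclic-trace manipulation that matches the stated $\bA_l$ and $\bC_k$.
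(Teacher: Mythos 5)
Your proposal follows essentially the same route as the paper's Appendix~B: the outer chain rule through the $\log_2$--$\exp$ composition producing the prefactor $\bar{V}$, Jacobi's formula $d\ln|\bP\bP^{\H}|=\tr\!\left[(\bP\bP^{\H})^{-1}d(\bP\bP^{\H})\right]$ applied with $d\bP=\bB_l\,d\bPhi^{l}\,\bC_l$, the cyclic-trace rearrangement that merges the left and right factors into $\bA_l$, and the collapse of the trace onto $\diag(\cdot)$ because $d(\bPhi^{l})^{\H}$ is diagonal, with the receive-side gradient obtained by the symmetric argument. Your explicit attention to the Wirtinger bookkeeping (isolating the $d\bphi_l^{*}$ portion) is slightly more careful than the paper's presentation, but it is not a different method.
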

	\begin{proof}
		Please see Appendix~\ref{lem1}.	
	\end{proof}
	
	The  optimization framework of the SIM leverages the gradient ascent method, offering a notable benefit due to the availability of a closed-form expression for the gradient. \textcolor{black}{Since the updates constitute projected gradient
	ascent over compact unit-modulus sets and $C_{\mathrm{LB}}$ is smooth and
	bounded, standard results guarantee convergence to a stationary point for
	suitable step sizes \cite{An2023,Papazafeiropoulos2024}. The convergence curve in the numerical section
	further confirms rapid monotonic convergence in practice}.
	 
\textcolor{black}{	The computational complexity of Algorithm~1 is mainly determined by the evaluation of the lower-bound expression and its gradients in~\eqref{capacity0}, \eqref{gradient2}, and \eqref{gradient3}. Each iteration requires computing the composite SIM matrices on the transmit and receive sides, which incurs a cost of $\mathcal{O}(L M^{2}N_{t} + K N^{2}N_{r})$. The subsequent formation and inversion of the covariance-related matrices, as well as the determinant and gradient calculations, contribute an additional $\mathcal{O}(M^{3} + N^{3} + N_{t}^{3} + N_{r}^{3})$. Therefore, the overall per-iteration complexity is $\mathcal{O}(L M^{2}N_{t} + K N^{2}N_{r} + M^{3} + N^{3} + N_{t}^{3} + N_{r}^{3})$, which corresponds to the cost of computing all terms required by the gradient-based updates in Algorithm~1.} \textcolor{black}{More specifically, this cost scales linearly with the number of SIM layers, while
		scaling cubic in the meta-atom and antenna dimensions.}

%	Its implementation is computationally efficient, as it primarily involves straightforward matrix operations. Specifically,  the complexity of \eqref{capacity0}, \eqref{gradient2}, and \eqref{gradient3} is similar because they  involve analogous matrix operations, implying that their computational complexity is $ \mathcal{O}(LM^{2}N_t+K N^{2} N_r+M^{3}+N^{3}+N_{t}^{3}+N_{r}^{3})$. 

	\section{Low-SNR Regime}
	According to \cite{Lozano2003}, the operation of MIMO systems in the low-SNR region is better captured by the normalized transmit EB $E_{b}/N_{0}$ as opposed to the conventional per-symbol SNR metric. This leads to the following capacity formulation $	C_{\mathrm{erg}}\left(\frac{E_{b}}{N_{0}}\right)\approx S_{0} \log_{2}\left(\frac{\frac{E_{b}}{N_{0}}}{\frac{E_{b}}{{N_{0}}_{\min}}}\right)$, 
	where $\frac{E_{b}}{{N_{0}}_{\min}}$ denotes the minimum normalized EB necessary to support any positive transmission rate reliably, while $S_{0}$ represents the WS, both serving as key indicators of low-SNR performance. Based on \cite{Verdu2002}, these two key performance metrics are defined as $\frac{E_{b}}{{N_{0}}_{\min}}=\frac{1}{\dot	C_{\mathrm{erg}}(0)}, ~S_{0}=-2\ln2\frac{\left(\dot C_{\mathrm{erg}}(0)\right)^{2}}{\ddot C_{\mathrm{erg}}(0)}$,  
	where $\dot{C}_{\mathrm{erg}}(0)$ and $\ddot{C}_{\mathrm{erg}}(0)$ represent the first- and second-order derivatives of the ergodic capacity in \eqref{MI} with respect to the SNR $\rho$, respectively. As outlined in \cite{Lozano2003}, analyzing the low-SNR regime involves utilizing the concept of the dispersion of a random matrix.
	\begin{definition}
		The dispersion of an $s \times s$ random matrix is defined as follows $		\zeta(\bA)=s\frac{\EE[\tr(\bA^{2})]}{\tr(\bA)}.$
	\end{definition}
	
	\begin{theorem}\label{th2}
		In a SIM-assisted HMIMO system with dimensions $N_{r} \times N_{t}$, the minimum EB and the WS are given by
		\begin{align}
			\frac{E_{b}}{{N_{0}}_{\min}}&=N_{t}\ln 2(\mathrm{tr}\left(\bR_{\mathrm{T}}\bP\bP^{\H}\right)\mathrm{tr}\left(\bR_{\mathrm{R}}\bD^{\H}\bD\right))^{-1},\label{minEnergy}\\
			S_{0}&=\frac{2 N_{t}N_{r}}{N_{t}\zeta(\bR_{\mathrm{T}}\bP\bP^{\H})+N_{r}\zeta(\bR_{\mathrm{R}}\bD^{\H}\bD)}.\label{slope}
		\end{align}
	\end{theorem}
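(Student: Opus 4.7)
The plan is to apply the Verdú--Lozano low-SNR characterization $E_{b}/N_{0\min}=1/\dot C_{\mathrm{erg}}(0)$ and $S_{0}=-2\ln 2\,[\dot C_{\mathrm{erg}}(0)]^{2}/\ddot C_{\mathrm{erg}}(0)$ directly, so the entire task reduces to evaluating the first two derivatives of \eqref{MI} at $\rho=0$. Substituting the Taylor expansion $\log_{2}\det(\Id+\bX)=\frac{1}{\ln 2}\bigl[\tr(\bX)-\tfrac{1}{2}\tr(\bX^{2})+O(\|\bX\|^{3})\bigr]$ with $\bX=(\rho/N_{t})\bH^{\H}\bH$ and exchanging expectation with differentiation, I would obtain $\dot C_{\mathrm{erg}}(0)=\frac{1}{N_{t}\ln 2}\EE[\tr(\bH^{\H}\bH)]$ and $\ddot C_{\mathrm{erg}}(0)=-\frac{1}{N_{t}^{2}\ln 2}\EE[\tr((\bH^{\H}\bH)^{2})]$. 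The proof thus collapses to a first- and a fourth-order moment computation on the Gaussian matrix $\tilde{\bG}$.

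For the first-moment step I substitute $\bH=\bD\bR_{\mathrm{R}}^{1/2}\tilde{\bG}\bR_{\mathrm{T}}^{1/2}\bP$ and introduce the auxiliary matrices $\bQ_{T}\triangleq\bR_{\mathrm{T}}^{1/2}\bP\bP^{\H}\bR_{\mathrm{T}}^{1/2}$ and $\bQ_{R}\triangleq\bR_{\mathrm{R}}^{1/2}\bD^{\H}\bD\bR_{\mathrm{R}}^{1/2}$. Cyclic invariance of the trace reduces $\tr(\bH^{\H}\bH)$ to $\tr(\tilde{\bG}^{\H}\bQ_{R}\tilde{\bG}\bQ_{T})$, and the elementary identity $\EE[\tilde{\bG}^{\H}\bM\tilde{\bG}]\propto\tr(\bM)\,\Id$, which follows from the i.i.d.\ Gaussianity of the entries of $\tilde{\bG}$, yields $\EE[\tr(\bH^{\H}\bH)]\propto\tr(\bQ_{R})\tr(\bQ_{T})=\tr(\bR_{\mathrm{R}}\bD^{\H}\bD)\tr(\bR_{\mathrm{T}}\bP\bP^{\H})$ after one more cyclic shuffle. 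Inverting the resulting $\dot C_{\mathrm{erg}}(0)$ then produces \eqref{minEnergy}.

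The main obstacle is the fourth-order term, which after the same substitution and cyclic reduction becomes $\EE[\tr(\tilde{\bG}^{\H}\bQ_{R}\tilde{\bG}\bQ_{T}\tilde{\bG}^{\H}\bQ_{R}\tilde{\bG}\bQ_{T})]$. I would evaluate it through the Isserlis/Wick quartic identity for complex Gaussian matrices, namely $\EE[\tr(\tilde{\bG}^{\H}\bA\tilde{\bG}\bB\tilde{\bG}^{\H}\bC\tilde{\bG}\bD)]\propto\tr(\bA)\tr(\bC)\tr(\bB\bD)+\tr(\bA\bC)\tr(\bB)\tr(\bD)$, specialized with $\bA=\bC=\bQ_{R}$ and $\bB=\bD=\bQ_{T}$. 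This yields exactly two contributions, $(\tr\bQ_{R})^{2}\tr(\bQ_{T}^{2})$ and $\tr(\bQ_{R}^{2})(\tr\bQ_{T})^{2}$, with $\tr(\bQ_{T}^{2})=\tr((\bR_{\mathrm{T}}\bP\bP^{\H})^{2})$ and $\tr(\bQ_{R}^{2})=\tr((\bR_{\mathrm{R}}\bD^{\H}\bD)^{2})$ after a further cyclic rearrangement.

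Substituting both derivatives into $S_{0}$, the common prefactors $1/(N_{t}\ln 2)$ cancel; dividing numerator and denominator by $(\tr\bQ_{T})^{2}(\tr\bQ_{R})^{2}$ casts $S_{0}$ as the reciprocal of a sum of two ratios of the form $\tr(\bQ^{2})/(\tr\bQ)^{2}$. Identifying each of these, after multiplication by the appropriate matrix dimension, as the dispersion $\zeta(\cdot)$ introduced in Definition~1, and collecting the resulting size factors $N_{t}$, $N_{r}$, delivers \eqref{slope}. The only step requiring genuine bookkeeping is the quartic moment; everything downstream is routine algebraic simplification.
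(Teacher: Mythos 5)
Your strategy is genuinely different from the paper's, and mostly sound. The paper computes nothing itself: it verifies the hypotheses of Lemma~3 of \cite{Lozano2003} to import $\EE[\mathrm{tr}(\bH^{\H}\bH)]=\mathrm{tr}(\bR_{\mathrm{T}}\bP\bP^{\H})\,\mathrm{tr}(\bR_{\mathrm{R}}\bD^{\H}\bD)$ and then quotes eq.~(19) of \cite{Lozano2003} verbatim for $S_{0}$, whereas you rederive both the second- and fourth-order moments from scratch via $\EE[\tilde{\bG}^{\H}\bM\tilde{\bG}]\propto\mathrm{tr}(\bM)\Id$ and the Wick quartic identity. Your quartic identity is correct (the two admissible pairings produce exactly the two terms you list), and the reduction of $S_{0}$ to $2\big(\mathrm{tr}(\bQ_{T}^{2})/(\mathrm{tr}\,\bQ_{T})^{2}+\mathrm{tr}(\bQ_{R}^{2})/(\mathrm{tr}\,\bQ_{R})^{2}\big)^{-1}$, with $\bQ_{T}=\bR_{\mathrm{T}}^{1/2}\bP\bP^{\H}\bR_{\mathrm{T}}^{1/2}$ and $\bQ_{R}=\bR_{\mathrm{R}}^{1/2}\bD^{\H}\bD\bR_{\mathrm{R}}^{1/2}$, is the right intermediate form. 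One constant you sweep under ``$\propto$'' does matter: the entries of $\tilde{\bG}$ have variance $\beta/M$, so $\dot C_{\mathrm{erg}}(0)$ carries a factor $\beta/M$ that does not cancel in \eqref{minEnergy}; the stated formula tacitly assumes a unit-variance core (an issue the paper's citation-based proof inherits as well, but your self-contained derivation should make it explicit).

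The genuine gap is the last step, ``collecting the resulting size factors $N_{t}$, $N_{r}$.'' This does not follow from anything you computed. The dispersions you form are of the $M\times M$ matrix $\bQ_{T}$ and the $N\times N$ matrix $\bQ_{R}$, so Definition~1 attaches the dimensions $M$ and $N$, not $N_{t}$ and $N_{r}$; moreover, dividing through by $(\mathrm{tr}\,\bQ_{T})^{2}(\mathrm{tr}\,\bQ_{R})^{2}$ places the \emph{receive}-side dimension next to the \emph{transmit}-side dispersion. Carried honestly to the end, your argument yields
\begin{equation*}
S_{0}=\frac{2MN}{N\,\zeta(\bR_{\mathrm{T}}\bP\bP^{\H})+M\,\zeta(\bR_{\mathrm{R}}\bD^{\H}\bD)},
\end{equation*}
which is not \eqref{slope} as stated. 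You must either justify the replacement of $(M,N)$ by $(N_{t},N_{r})$ --- your derivation does not support it, since the i.i.d.\ Gaussian core $\tilde{\bG}$ is $N\times M$ while the pattern-matched formula in the paper presumes an $N_{r}\times N_{t}$ core --- or explicitly flag the discrepancy between your first-principles result and the theorem. As written, the assertion that the bookkeeping ``delivers \eqref{slope}'' is the one step of your proof that fails.
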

	\begin{proof}
		Please see Appendix~\ref{th2proof}.	
	\end{proof}
	
	It is worth highlighting that, in SIM-assisted HMIMO systems, the minimum EB is influenced by the spatial correlation, whereas in conventional MIMO systems, it remains unaffected by such correlation. Furthermore, increasing the number of transmit antennas $ N_{t}$ generally leads to a higher minimum EB. Notably, both expressions depend on the SIMs in terms of the matrices $\bP$ and $\bD$. Lastly, under the assumptions of i.i.d. Rayleigh fading and no SIMs, expressions \eqref{minEnergy} and \eqref{slope} reduce to 
	%	\begin{align}
		%			\frac{E_{b}}{{N_{0}}_{\min}}=\frac{\ln 2}{ N_{r}},~~	S_{0}=\frac{2 N_{t}N_{r}}{ N_{t}+ N_{r}},\\
		%	\end{align}
	%	which align with 
	the results presented in \cite[eq. (17)]{Lozano2003} and \cite[eq. (19)]{Lozano2003}, respectively.
	%	\subsection{Low-SNR Metrics Optimization}\label{RateOptimization1}	
	
	Similar to the optimization problem $ (\mathcal{P}) $, two problems can be formulated with the same constraints, one for the  minimum EB and one for the WS. The procedure is omitted since it is the same as the methodology followed in Section \ref{ErgodicCapacity}.
	
	The corresponding gradients are obtained as
	\begin{align}
		&\nabla_{\bphi_{l}}	\frac{E_{b}}{{N_{0}}_{\min}}=-N_{t}\ln 2\frac{\diag(\bA_{l}\bR_{\mathrm{T}}\bP^{\H}\bP )}{(\mathrm{tr}\left(\bR_{\mathrm{T}}\bP\bP^{\H}\right)\mathrm{tr}\left(\bR_{\mathrm{R}}\bD^{\H}\bD\right))^{2}},\\
		&	\nabla_{\bxi_{k}}	\frac{E_{b}}{{N_{0}}_{\min}}=-N_{t}\ln 2\frac{\diag(\bC_{k}\bR_{\mathrm{R}}\bD\bD^{\H} )}{(\mathrm{tr}\left(\bR_{\mathrm{T}}\bP\bP^{\H}\right)\mathrm{tr}\left(\bR_{\mathrm{R}}\bD^{\H}\bD\right))^{2}}, \\
		&	\nabla_{\bphi_{l}}	S_{0}=-\frac{2 N_{t}^{2}N_{r} s}{(N_{t} \zeta(\bR_{\mathrm{T}}\bP\bP^{\H})+N_{r}\zeta(\bR_{\mathrm{R}}\bD^{\H}\bD))^{2}\tr^{2}(\bR_{\mathrm{T}}\bP\bP^{\H})}\\
		&\times \!\diag\!\left(\!\right.2 \tr\left(\right.\!\bR_{\mathrm{T}}\bP\bP^{\H})\bA_{l}\bP^{\H}\bP\bR_{\mathrm{T}}\bP^{\H}\bR_{\mathrm{T}} \\
		&-\tr\! \left(\right.\!\!(\bR_{\mathrm{T}}\bP\bP^{\H}\!\left. \right)^{2}\bA_{l}\bR_{\mathrm{T}}\!\left. \right)\!\!\left. \right),\\
		&\nabla_{\bxi_{k}}S_{0}=-\frac{2 N_{t}^{2}N_{r} s}{(N_{t} \zeta(\bR_{\mathrm{T}}\bP\bP^{\H})+N_{r}\zeta(\bR_{\mathrm{R}}\bD^{\H}\bD))^{2}\tr^{2}(\bR_{\mathrm{R}}\bD^{\H}\bD)}\\
		&\times \!\diag\!\left(\!\right.2 \tr\left(\right.\!\bR_{\mathrm{R}}\bD^{\H}\bD)\bC_{k}\bD\bD^{\H}\bR_{\mathrm{R}}\bD\bR_{\mathrm{R}} \\
		&-\tr\! \left(\right.\!\!(\bR_{\mathrm{R}}\bD^{\H}\bD\!\left. \right)^{2}\bC_{k}\bR_{\mathrm{R}}\!\left. \right)\!\!\left. \right).
	\end{align}
	\begin{proof}
		The proof is omitted due to limited space and because it follows the steps of the proof of Lemma~\ref{lemmaGradient} since the dependence of $	\frac{E_{b}}{{N_{0}}_{\min}}$ and $S_{0}$ from $\bP$ and $\bD$ is similar to the dependence  of $C_{\mathrm{LB}}$ from these parameters.	
		%	 Note that we have additionally used the property $d( (\bX)^{-1})=-\bX^{-1}d (\bX)\bX^{-1}$ for a square, invertible matrix $\bX \in \mathbb{C}^{n \times n}$.
	\end{proof}
	
	The complexity of the two algorithms is similar to  problem $(\mathcal{P})$ since their implementation involves similar  matrix operations.
	\section{Numerical Results}\label{Numerical}
	The analytical expressions and Monte Carlo (MC) simulation results are used for validation. In the simulated setup, the SIMs at the transmitter and receiver are placed along  the $x$–$z$ plane, with their centers being along the 
	$y$-axis at an altitude of of $H_{\mathrm{BS}} = 5~\mathrm{m}$. The design and configuration of both SIMs adhere to the methodology outlined in \cite{An2023,Papazafeiropoulos2024a}. 	Each meta-atom occupies an area of $(\lambda/2)^2$, and the distance separating neighboring meta-atoms is fixed at $\lambda/2$. The gap between the two metasurfaces is given by $d_{\mathrm{SIM}} = T_{\mathrm{SIM}}/L$, where $T_{\mathrm{SIM}} = R_{\mathrm{SIM}} = 5\lambda$ denotes the SIM thickness. A path-loss exponent of $b = 2.5$ is adopted. The separation between the transmitter- and receiver-side SIMs is fixed at $d=200~\mathrm{m}$. 
	The system employs a bandwidth of $20~\mathrm{MHz}$ at a carrier frequency of $2~\mathrm{GHz}$. 
	The total transmit power is set to $\rho=20~\mathrm{dBm}$, and the receiver noise (sensitivity) level is taken as $\sigma^{2}=-110~\mathrm{dBm}$. 
	\textcolor{black}{By default, the antenna arrays comprise $N_t=N_r=8$ elements, each SIM contains $M=40, N=100$ meta-atoms, and the number of SIM layers is $K=L=4$} \textcolor{black}{\cite{An2023,Papazafeiropoulos2024a}}.
	
	 \begin{figure*}[t]
	 	\begin{minipage}{0.33\textwidth}
	 		\centering
	 		\captionbox{Analytical lower bound and simulated ergodic capacity  versus the SNR.\label{fig1}}{	\includegraphics[trim=0cm -1.2cm 0cm 0.2cm, clip=true, width=2.2in]{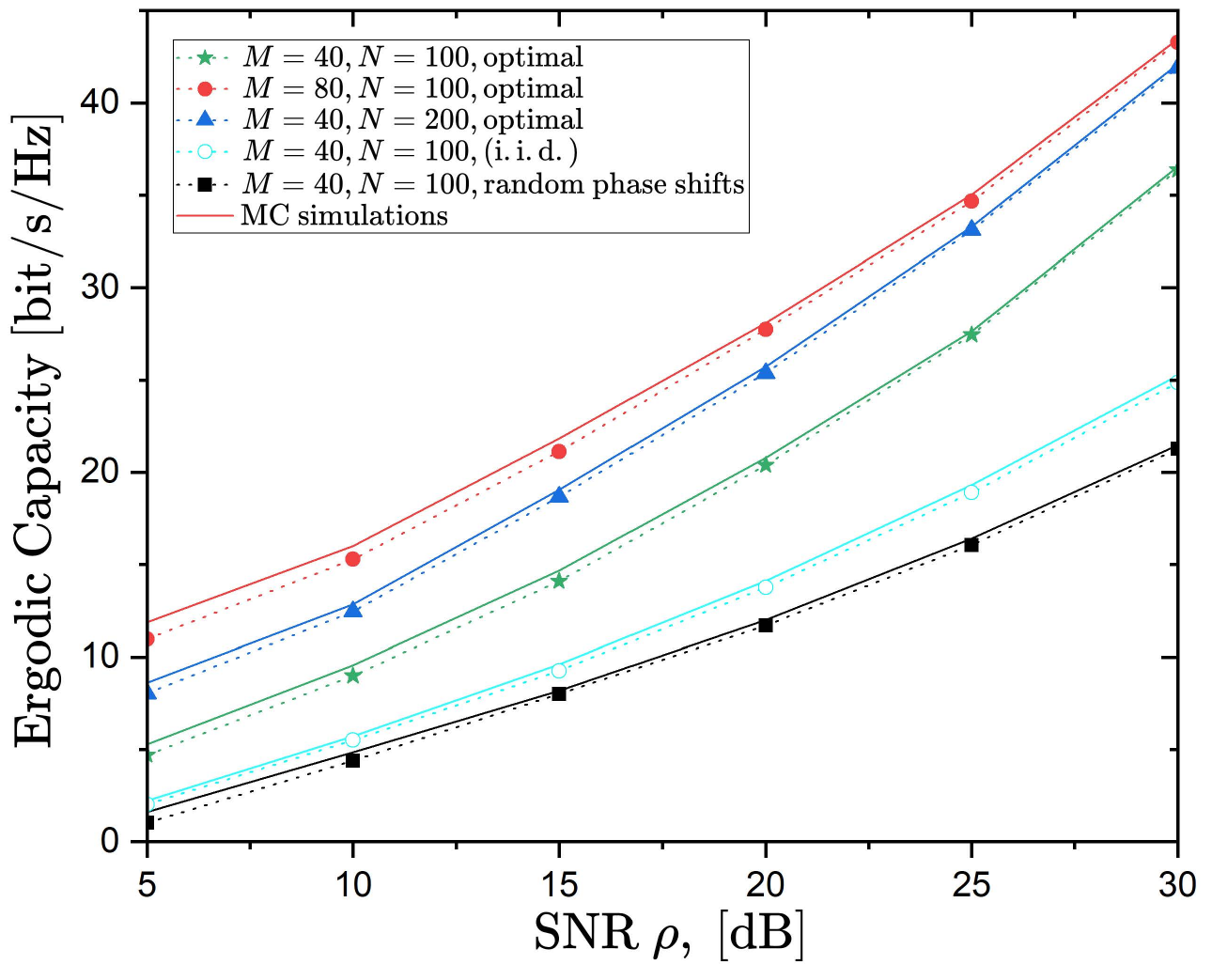}}
	 		\vspace*{-0.2cm}
	 	\end{minipage}
	 	\begin{minipage}{0.33\textwidth}
	 		\centering
	 		\captionbox{Simulated and analytical low-SNR ergodic capacity versus the transmit EB.\label{fig2}}{	\includegraphics[trim=0cm 0.0cm 0cm 0.0cm, clip=true, width=2.2in]{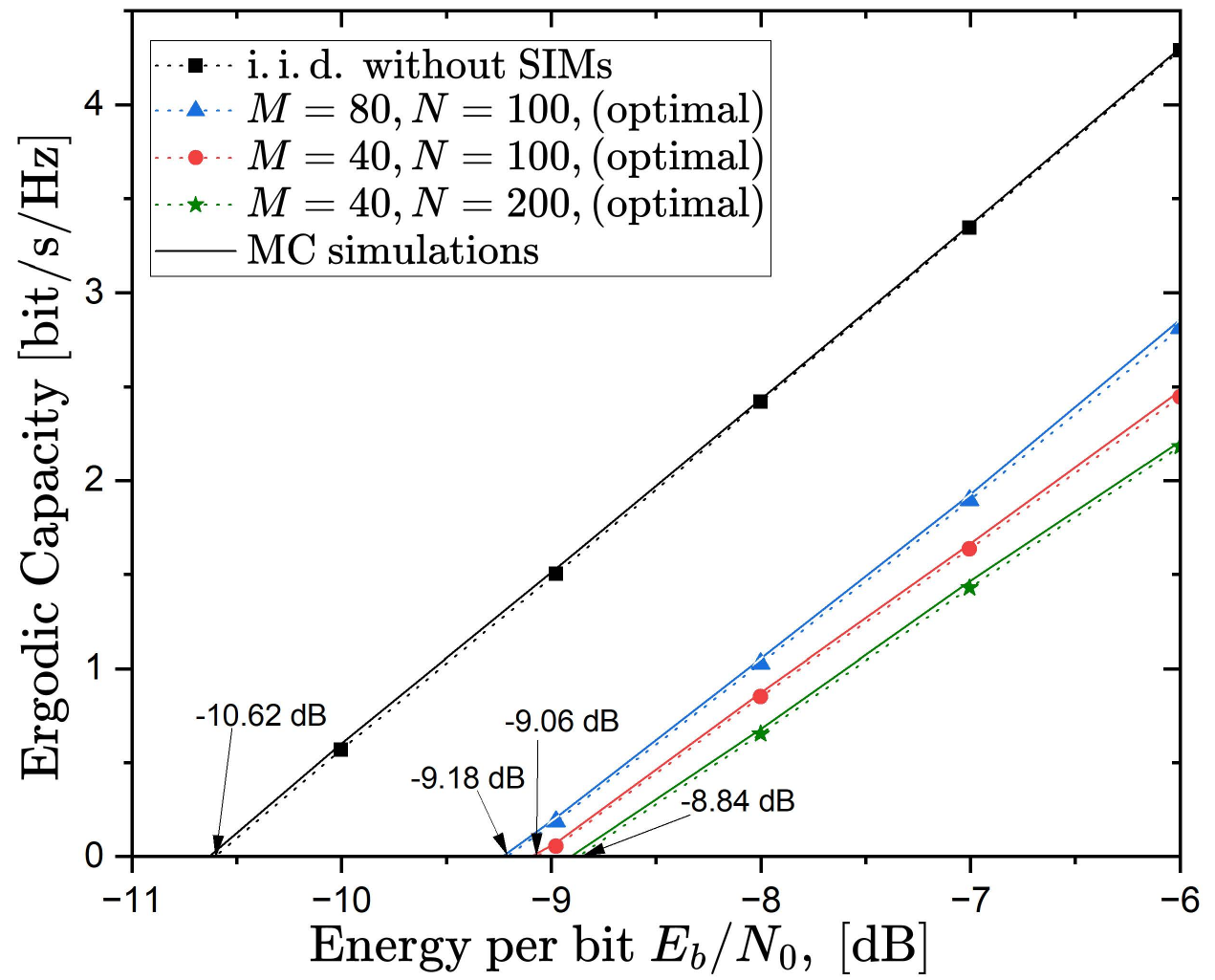}}
	 		\\ 
	 		\vspace*{-0.2cm}
	 	\end{minipage}
	 	\begin{minipage}{0.33\textwidth}
	 	\centering
	 	\captionbox{Analytical lower bound with respect to the number of iterations.\label{fig3}}{	\includegraphics[trim=0cm -1.80cm 0cm 0.2cm, clip=true, height=1.9in, width=2.1in]{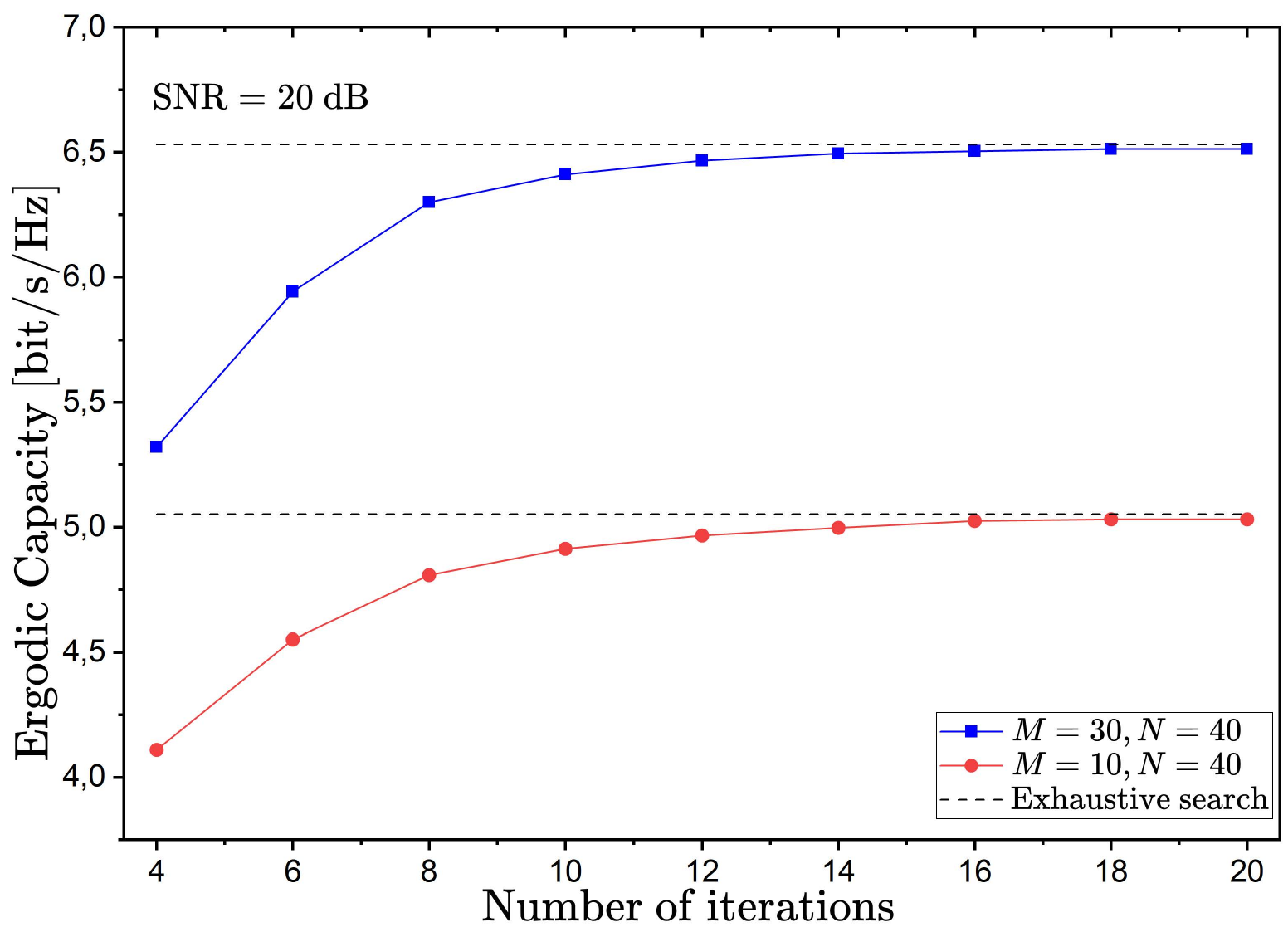}}
	 	\\ 
	 	\vspace*{-0.2cm}
	 \end{minipage}
	 	\label{Fig22}
	 \end{figure*}
	 
	\textcolor{black}{In Fig. \ref{fig1}, we  observe an excellent match
	between the analytical lower bound and Monte Carlo simulations for
	various $(M,N)$ configurations, confirming the tightness of the bound
	across the SNR range. As expected, increasing the number of SIM
	meta-atoms improves the ergodic capacity. However, the figure reveals an
	important asymmetry: enlarging $s=\min(M,N)$ yields a much stronger
	capacity gain than enlarging $t=\max(M,N)$, which  is fully
	consistent with the analytical expression. The figure further includes the i.i.d. channel case and the
	scenario with unoptimized (random) SIM phase shifts, both showing 
	lower performance and underscoring the impact of phase optimization on
	the overall SIM-assisted HMIMO gain.}
	
	Fig. \ref{fig2} shows the simulated and analytical low-SNR capacity versus the transmit EB ${E_{b}}/{N_{0}}$ according to Theorem \ref{th2}. For comparative purposes, we have depicted the scenario of a conventional i.i.d. Rayleigh MIMO channel. \textcolor{black}{Fig.~\ref{fig2} highlights a notable trade-off: although SIMs improve the
		ergodic capacity at moderate and high SNRs, they increase the minimum
		energy per bit $E_b/N_0^{\min}$. From Theorem~2, this metric scales with
		$(\mathrm{tr}(\mathbf{R}_T\mathbf{P}\mathbf{P}^H)\,
		\mathrm{tr}(\mathbf{R}_R\mathbf{D}\mathbf{D}^H))^{-1}$, which becomes
		larger when the SIM induces stronger spatial correlation. At very low SNR, this correlation dominates the
		channel behavior and reduces the effective rank, requiring more energy
		to reliably transmit even vanishingly small rates. This also explains
		the slight reduction in the wideband slope $S_0$ in Fig.~\ref{fig2}. In summary,
		SIMs enhance capacity when power is sufficient, but in the energy-limited
		regime their correlation-enhancing effect leads to higher required
		$E_b/N_0^{\min}$.}
		
		\textcolor{black}{			Fig.~\ref{fig3} illustrates the convergence  of the projected gradient ascent scheme. In particular, $C_{\mathrm{LB}}$ increases monotonically and stabilizes within few
			iterations for different initializations. The algorithm finishes  when the iterations are greater than $20$ or the difference  between the two last iterations is less than $10^{-5}$. A small-scale comparison with
			exhaustive search further indicates that any performance loss due to the
			projection step is negligible in practice. }
%	\begin{figure}%
%		\centering
%		\includegraphics[width=0.9\linewidth]{Graph1.pdf}
%		\caption{Analytical lower bound and simulated ergodic capacity  versus the SNR.}
%		\label{fig1}
%	\end{figure}
%	
%	\begin{figure}%
%		\centering
%		\includegraphics[width=0.9\linewidth]{Graph2.pdf}
%		\caption{Simulated and analytical low-SNR ergodic capacity versus the transmit EB.}
%		\label{fig2}
%	\end{figure}

\textcolor{black}{	\noindent\textbf{Discussion on the Impact of SIM Parameters:}
	The trends observed in Fig.~\ref{fig1} and Fig.~\ref{fig2} illustrate the
	impact of the SIM aperture dimensions $(M,N)$. Increasing the number of
	meta-atoms per layer enlarges the effective SIM aperture and enables finer
	sampling of the impinging wavefield, which enhances the spatial
	degrees of freedom captured by the composite matrices $\mathbf{P}$ and
	$\mathbf{D}$. This leads to a more favorable eigenvalue distribution of
	$\mathbf{G}^H\mathbf{G}$ and results in the monotonic capacity gains
	visible in Fig.~\ref{fig1}. The figure also shows that changes in
	$s=\min(M,N)$ have a more pronounced effect than changes in
	$t=\max(M,N)$, which means that the smallest SIM play a key role in the performance. Furthermore, increasing the number of layers $L$ and $K$ (not shown due to limited space) introduces
	additional cascaded diffraction, which enriches the
	spatial degrees of freedom embedded in the composite matrices $\mathbf{P}$
	and $\mathbf{D}$, i.e., the capacity becomes larger. These effects extend to the low-SNR regime in Fig.~\ref{fig2}. The metrics
	$E_b/N_0^{\min}$ and $S_0$ depend on
	$\mathrm{tr}(\mathbf{R}_T\mathbf{P}\mathbf{P}^H)$ and
	$\mathrm{tr}(\mathbf{R}_R\mathbf{D}\mathbf{D}^H)$, both of which increase
	with the SIM aperture size.}

	\section{Conclusion} \label{Conclusion} 
	This letter presented a lower bound formulation for the ergodic capacity of SIM-assisted HMIMO systems. The resulting expression is mathematically tractable, valid for a finite number of antennas and SIM elements, and exhibits  tightness across the SNR. In addition, a second-order capacity expansion was performed in the low-SNR regime, leading to closed-form expressions for key performance indicators, namely the minimum EB and the WS, offering valuable insights into the    performance of the system. \textcolor{black}{Relevant extensions include accounting for hardware impairments, imperfect
		CSI, and practical phase quantization in SIM layers. Incorporating these
		constraints is essential for translating SIM-assisted HMIMO into real-world
		deployments.}
	\begin{appendices}
		\section{Proof of Theorem~\ref{th1}}\label{th1proof}	
	\textcolor{black}{	Starting from (12),
			Minkowski's determinant inequality yields
		\[
		\det(\mathbf{I}_{N_t}+\tfrac{\rho}{N_t}\mathbf{H}\mathbf{H}^H)^{1/N_t}
		\ge 1+\tfrac{\rho}{N_t}\det(\mathbf{H}\mathbf{H}^H)^{1/N_t},
		\]
		leading to
		\[
		C_{\mathrm{erg}}
		\ge N_t\,\mathbb{E}\!\left[
		\log_2\!\left(1+\tfrac{\rho}{N_t}
		e^{\,\frac{1}{N_t}\ln\det(\mathbf{H}\mathbf{H}^H)}\right)
		\right].
		\]
		Since $f(x)=\log_2(1+a e^x)$ is convex for $a>0$, Jensen's inequality gives
		\[
		C_{\mathrm{erg}}
		\ge N_t\log_2\!\left(
		1+\tfrac{\rho}{N_t}
		e^{\,\frac{1}{N_t}\mathbb{E}[\ln\det(\mathbf{H}\mathbf{H}^H)]}
		\right).
		\]}
		Using $\mathbf{H}=\mathbf{D}\mathbf{G}\mathbf{P}$ and
		$\det(\mathbf{AB})=\det(\mathbf{A})\det(\mathbf{B})$, we have
		\begin{align}
			\ln\det(\mathbf{H}\mathbf{H}^H)
		&= \ln\det(\mathbf{G}^H\mathbf{G})
		+ \ln|\mathbf{P}\mathbf{P}^H|\nn\\
		&		+ \ln|\mathbf{D}^H\mathbf{D}|
		+ \ln|\mathbf{R}_T| + \ln|\mathbf{R}_R|.\label{eq1}
	\end{align}
	
	\textcolor{black}{For the Kronecker MIMO model, $\mathbf{G}\in\mathbb{C}^{N\times M}$
	yields a Gram matrix with $\operatorname{rank}(\mathbf{G}^H\mathbf{G})=
	\min(M,N)$. Using the standard Wishart result \cite[A.8.1]{Grant2002}, we
	obtain
	\[
	\mathbb{E}\!\left[\ln\det(\mathbf{G}^H\mathbf{G})\right]
	= \sum_{i=0}^{\min(M,N)-1}\psi\big(\max(M,N)-i\big),
	\]
	which, when substituted into~\eqref{eq1}, yields Theorem~1.}

		\section{Proof of Lemma~\ref{lemmaGradient}}\label{lem1}
		The step-by-step computation of  	$ \nabla_{\bphi_{l}}f(\bphi_{l},\bxi_{k}) $ starts with computing its differential, given by
		\begin{align}
			&	d(f(\bphi_{l},\bxi_{k}))
			\!=\! \frac{\rho V d(|\bP\bP^{\H}|)}{ \ln2 N_{t} \bigg(\!\!1+\frac{\rho}{N_{t}} V\!\bigg)}\label{differentialPhi2}\\
			&\!=\! \frac{ \rho V\tr((\bP\bP^{\H})^{-1} (d(\bP)\bP^{\H}+\bP d(\bP^{\H})  )) }{ \ln2  N_{t} \bigg(\!\!1+\frac{\rho}{N_{t}} V\!\bigg)},\label{differentialPhi3}
		\end{align}
		where $ V=\exp \Big(\textcolor{black}{\frac{1}{N_{t}}}\big(\textcolor{black}{\sum_{i=0}^{s-1}\psi(t-i)}+\ln |\bP\bP^{\H}|+\ln |\bD^{\H}\bD|+\ln |\bR_{\mathrm{T}}|+|\bR_{\mathrm{R}}|\big)\!\Big)$. 	In \eqref{differentialPhi3}, we have made use of the property $d(|\bX|)=|\bX|\tr(\bX^{-1} d \bX)$ for a square, invertible matrix $\bX \in \mathbb{C}^{n \times n}$.

		Additionally, the expression in \eqref{TransmitterSIM} admits the following differential form\begin{align}
			d(\bP)&=\bPhi^{L}\bW^{L}\cdots\bPhi^{l+1}\bW^{l+1}d(\bPhi^{l})\bW^{l}\bPhi^{l-1}\nn\\
			&\times\bW^{l-1}\cdots\bPhi^{1}\bW^{1}. \label{differentialPhi4}
		\end{align}
		Upon replacing $d(\bP)$ with its expression in \eqref{differentialPhi3}, we derive
		\begin{align}
			&	d(f(\bphi_{l},\bxi_{k}))\nn\\
			&		\!=\! \frac{\rho V \tr((\bP\bP^{\H})^{-1} (\bA_{l} d(\bPhi^{l}) +\bA_{l}^{\H} d((\bPhi^{l})^{\H})  )) }{ \ln2  N_{t} \bigg(\!\!1+\frac{\rho}{N_{t}} V\!\bigg)},\label{differentialPhi5}
		\end{align}
		where 
		\begin{align}
			\bA_{l}&\!=\!\bW^{l}\bPhi^{l-1}\bW^{l-1}\!\cdots\! \bPhi^{1}\bW^{1}\bPhi^{L}\bW^{L}\cdots\bPhi^{l+1}\bW^{l+1}.
		\end{align}

		The gradient is obtained as
		\begin{align}
			\nabla_{\bphi_{l}}f(\bphi_{l},\bxi_{k})&=	\frac{\rho V\diag(\bA_{l}(\bP^{\H}\bP)^{-1} ) }{ \ln2 N_{t} \bigg(\!\!1+\frac{\rho}{N_{t}} V\!\bigg)}.
		\end{align}
		In a similar way, $\nabla_{\bxi_{k}}f(\bphi_{l},\bxi_{k})$ is obtained.
		
		\section{Proof of Theorem~\ref{th2}}\label{th2proof}
		The derivation commences by recognizing that $\frac{\mathrm{d}}{\mathrm{d} x}	\ln(\det(\Id+x \bA))\Bigr|_{\substack{x=0}}=\tr (\bA)$, 
		which leads to the need to compute $\mathbb{E}\big[\mathrm{tr}\left(\mathbf{H}^{\mathrm{H}}\mathbf{H}\right)\big]$. Nonetheless, performing this computation directly can be quite cumbersome. To address this challenge, we employ tools from random matrix theory, drawing upon the methodology presented in \cite{Lozano2003}. Specifically, by simultaneously considering the impact of spatial correlations and passive beamforming matrices, the framework outlined in \cite[Appendix B]{Lozano2003}  becomes applicable. Under these conditions, the prerequisites of \cite[Lemma 3]{Lozano2003} are satisfied, allowing the expression to be simplified as follows
		\begin{align}
			\mathbb{E}\big[\mathrm{tr}\left(\mathbf{H}^{\mathrm{H}}\mathbf{H}\right)\big]
			&=\mathrm{tr}\left(\bR_{\mathrm{T}}\bP\bP^{\H}\right)\mathrm{tr}\left(\bR_{\mathrm{R}}\bD^{\H}\bD\right)\label{MI3}.
		\end{align}
		
		From the combination of \eqref{MI} and \eqref{MI3}, equation \eqref{minEnergy} follows directly.
		The WS can be derived using the formulation provided in \cite[eq. (19)]{Lozano2003} as
		\begin{align}
			S_{0}&=\frac{2 N_{t}N_{r}}{N_{t}\zeta(\bR_{\mathrm{T}}\bP\bP^{\H})+N_{r}\zeta(\bR_{\mathrm{R}}\bD^{\H}\bD)}.
		\end{align}

	\end{appendices}

	\bibliographystyle{IEEEtran}

	\bibliography{IEEEabrv,bibl}
\end{document}